\documentclass{acm_proc_article-sp}

\usepackage{amsmath,amssymb}
\usepackage{latexsym}
\usepackage{url}
\usepackage{xspace}
\usepackage{booktabs}
\usepackage{paralist}
\usepackage{tabulary}
\usepackage{subcaption}
\usepackage{tikz}
\usetikzlibrary{shapes,arrows,positioning,chains,decorations.pathreplacing,calc}

\tikzset{release/.style={regular polygon, regular polygon sides = 4,draw,inner sep=2pt}}
\tikzset{task/.style={circle,draw}}
\tikzset{gate/.style={regular polygon, regular polygon sides = 6,draw}}
\tikzset{exgate/.style={gate,fill=gray!10}}
\tikzset{ingate/.style={gate}}

\usepackage[ruled,lined,linesnumbered,boxed,vlined]{algorithm2e}
\SetKwInOut{Input}{input}
\SetKwInOut{Output}{output}

\usepackage{pgfplots}

\newcommand{\set}[1]{\left\{#1\right\}}

\newcommand{\etal}{{\em et al.}\xspace}
\newcommand{\vwsp}{\textsc{Valued WSP}\xspace}
\newcommand{\rrbac}{{\sf R$^2$BAC}}
\newcommand{\card}[1]{\left|#1\right|}
\newcommand{\ua}{\mathit{UA}}
\newcommand{\pa}{\mathit{PA}}

\newtheorem{thm}{Theorem}

\newtheorem{prop}{Proposition}

\newtheorem{df}{Definition}

\sloppy

\allowdisplaybreaks

\pagenumbering{arabic}

\begin{document}

\title{Valued Workflow Satisfiability Problem}

\numberofauthors{3}

\author{
\alignauthor Jason Crampton\\
	   \affaddr{Information Security Group}
       \affaddr{Royal Holloway, University of London}\\
       \affaddr{Egham, Surrey}\\
       \affaddr{UK TW20\,0EX}\\
       \email{jason.crampton@rhul.ac.uk}
\alignauthor Gregory Z. Gutin\\
       \affaddr{Department of Computer Science}\\
       \affaddr{Royal Holloway, University of London}\\
       \affaddr{Egham, Surrey}\\
       \affaddr{UK TW20\,0EX}\\
       \email{g.gutin@rhul.ac.uk}
\alignauthor Daniel Karapetyan\\
       \affaddr{ASAP Research Group}\\
       \affaddr{School of Computer Science}\\
       \affaddr{University of Nottingham}\\
       \affaddr{Nottingham}\\
       \affaddr{UK NG8\,1BB}\\
       \email{g.gutin@rhul.ac.uk}
}

 
\maketitle

\begin{abstract}
A workflow is a collection of steps that must be executed in some specific order to achieve an objective.
A computerised workflow management system may enforce authorisation policies and constraints, thereby restricting which users can perform particular steps in a workflow.
The existence of policies and constraints may mean that a workflow is unsatisfiable, in the sense that it is impossible to find an authorised user for each step in the workflow and satisfy all constraints.
In this paper, we consider the problem of finding the ``least bad'' assignment of users to workflow steps by assigning a weight to each policy and constraint violation.
To this end, we introduce a framework for associating costs with the violation of workflow policies and constraints and define the \emph{valued workflow satisfiability problem} (\vwsp), whose solution is an assignment of steps to users of minimum cost.
We establish the computational complexity of \vwsp with user-independent constraints and show that it is fixed-parameter tractable.
We then describe an algorithm for solving \vwsp with user-independent constraints and evaluate its performance, comparing it to that of an off-the-shelf mixed integer programming package.
\end{abstract}

%

\category{D4.6}{Operating Systems}{Security and Protection}[Access controls]
\category{F2.2}{Analysis of Algorithms and Problem Complexity}{Nonnumerical Algorithms and Problems}
\category{H2.0}{Database Management}{General}[Security, integrity and protection]

\terms{Algorithms, Security, Theory}

\keywords{workflow satisfiability, parameterized complexity, valued workflow satisfiability problem}

\section{Introduction}\label{sec:intro}

It is increasingly common for organisations to computerise their business and management processes.
The co-ordination of the tasks or steps that comprise a computerised business process is managed by a workflow management system (or business process management system).
A workflow is defined by the steps in a business process and the order in which those steps should be performed.
A workflow is executed multiple times, each execution being called a \emph{workflow instance}.
Typically, the execution of each step in a workflow instance will be triggered by a human user, or a software agent acting under the control of a human user.
As in all multi-user systems, some form of access control, typically specified in the form of policies and constraints, should be enforced on the execution of workflow steps, thereby restricting the execution of each step to some authorised subset of the user population.

Policies typically specify the workflow steps for which users are authorised, what Basin \etal\ call \emph{history-independent} authorisations~\cite{BaBuKa12}.
Constraints restrict which groups of users can perform sets of steps.
It may be that a user, while authorised by the policy to perform a particular step $s$, is prevented (by one or more constraints) from executing $s$ in a specific workflow instance because particular users have performed other steps in the workflow (hence the alternative name of \emph{history-dependent} authorizations~\cite{BaBuKa12}).
The concept of a Chinese wall, for example, limits the set of steps that any one user can perform~\cite{BrNa89}, as does separation-of-duty, which is a central part of the role-based access control model~\cite{ansi-rbac04}.
We note that policies are, in some sense, discretionary, as they are defined by the workflow administrator in the context of a given set of users.
However, constraints may be mandatory (and independent of the user population), in that they may encode statutory requirements governing privacy or separation-of-concerns or high-level organisational requirements.

It is well known that a workflow specification may be ``unsatisfiable'' in the sense that the combination of policies and constraints means that there is no way of allocating authorised users to workflow steps without violating at least one constraint.
The workflow satisfiability problem is NP-hard~\cite{WaLi10} although relatively efficient algorithms have been developed on the assumption that the number of workflow steps is much smaller than the number of users that may perform steps in the workflow~\cite{CoCrGaGuJo14,CrGuYe13,KaGaGu,WaLi10}.
Of course, the objectives of the business process associated with a workflow specification can never be achieved if the specification is unsatisfiable.
Hence, it is interesting to consider an extended version of the workflow satisfiability problem that seeks the ``best'' allocation of users to steps in the event that the specification is unsatisfiable.

Accordingly, in this paper we study the \emph{valued} workflow satisfiability problem (\vwsp).
Informally, we associate constraint and authorisation violations with a cost, which may be regarded as an estimate of the risk associated with allowing those violations.
We then compute an assignment of users to steps having minimal cost, this cost being zero when the workflow is satisfiable.
In a sense, our work is related to recent work on risk-aware access control~\cite{ChCr11,ChRoKeKaWaRe07,DiBeEyBaMo04,NiBeLo10}, which seeks to compute the risk of allowing a user to perform an action, rather than simply computing an allow/deny decision, and ensure that cumulative risk remains within certain limits.
However, unlike related work, we focus on computing user-step assignments of minimal cost, rather than access control decisions.

Our main contributions are: 
  to define \vwsp and determine its complexity; 
  to prove that \vwsp is fixed-parameter tractable for weighted user-independent constraints; 
  to develop an algorithm to solve \vwsp\ with user-independent constraints; 
  to provide a comprehensive experimental evaluation of our algorithm; 
  and to demonstrate that the performance of our algorithm compares very favourably with an approach that uses the mixed integer programming solver CPLEX.
Our experimental evaluation shows our algorithm enjoys a substantial advantage over CPLEX as the number of steps grows, with our algorithm being able to deal far better with problem instances containing more than 30 steps.
Moreover, our algorithm is far better at solving instances that are unsatisfiable -- precisely those instances for which \vwsp is relevant.

In the next section, we define \vwsp, having reviewed relevant concepts from the literature, including the workflow satisfiability problem and user-independent constraints.
In Section~\ref{sec:solving-vwsp}, we prove that \vwsp is fixed-parameter tractable for user-independent constraints and describe an algorithm based on the concept of a \emph{pattern}, which is, informally, a compact representation of a set of similar plans for a user-independent constraint.
In Section~\ref{sec:experiments}, we present our experimental results.
This section also includes a method of representing \vwsp as a mixed integer programming problem, which may be of use in subsequent research.
We conclude the paper with a discussion of related work, a summary of contributions and some suggestions for future work.
 
\section{Background and Problem\\ Statement}

We first briefly summarise relevant concepts from the literature, including \emph{workflow authorisation schema}, \emph{workflow constraints} and the \emph{workflow satisfiability problem} (WSP).
We then explain how a constrained workflow authorisation schema can be extended to assign costs to plans that do not satisfy the schema's policy and constraints.
We conclude this section with a formal definition of \vwsp.

\subsection{WSP}

A directed acyclic graph $G = (V,E)$ is defined by a set of nodes $V$ and a set of edges $E \subseteq V \times V$.
The reflexive, transitive closure of a directed acyclic graph (DAG) defines a partial order, where $v \leqslant w$ if and only if there is a path from $v$ to $w$ in $G$.
We may write $v \geqslant w$ whenever $w \leqslant v$.
We may also write $v < w$ whenever $v \leqslant w$ and $v \ne w$.

\begin{df}\label{def:workflow}
A \emph{workflow specification} is defined by a directed, acyclic graph $G = (S,E)$, where $S$ is a set of steps and $E \subseteq S \times S$.
Given a workflow specification $(S,E)$ and a set of users $U$, an \emph{authorisation policy} for a workflow specification is a relation $A \subseteq S \times U$.
A \emph{workflow authorisation schema} is a tuple $(G,U,A)$, where $G = (S,E)$ is a workflow specification and $A$ is an authorisation policy.
\end{df}

A workflow specification describes a sequence of steps and the order in which they must be performed when the workflow is executed, each such execution being called a \emph{workflow instance}.%
\footnote{In this paper, the ordering on the steps is not considered.  Prior work has shown that the ordering is irrelevant to the question of satisfiability subject to certain assumptions about the constraints~\cite{CrGuYe13}, assumptions that are satisfied by the constraints considered in this paper.}
User $u$ is authorised to perform step $s$ only if $(s,u) \in A$.%
\footnote{In practice, the set of authorised step-user pairs, $A$, will not be defined explicitly.
          Instead, $A$ will be inferred from other access control data structures.
          In particular, \rrbac~--~the role-and-relation-based access control model of Wang and Li~\cite{WaLi10}~--~introduces a set of roles $R$, a user-role relation ${\it UR} \subseteq U \times R$ and a role-step relation ${\it SA} \subseteq R \times S$ from which it is possible to derive the steps for which users are authorised.
          For all common access control policies (including \rrbac), it is straightforward to derive $A$.
          We prefer to use $A$ in order to simplify the exposition.}
We assume that for every step $s \in S$ there exists some user $u \in U$ such that $(s,u) \in A$ (otherwise the workflow is trivially unsatisfiable).

\begin{df}
Let $((S,E),U,A)$ be a workflow authorisation schema.
A \emph{plan} is a function $\pi: T \rightarrow U$, where $T\subseteq S$. A plan $\pi$ is \emph{complete} if $T=S.$
\end{df}

\begin{df}
A \emph{workflow constraint} has the form $(T,\Theta)$, where $T \subseteq S$ and $\Theta$ is a family of functions with domain $T$ and range $U$.
$T$ is the \emph{scope} of the constraint $(T,\Theta)$.
A \emph{constrained workflow authorization schema} is a pair $((S,E),U,A,C)$, where $((S,E),U,A)$ is a workflow authorization schema and $C$ is a set of workflow constraints.
\end{df}

Informally, a workflow constraint $(T,\Theta)$ limits the users that are allowed to perform a set of steps $T$ in any given instance of the workflow.
In particular, $\Theta$ identifies authorised (partial) assigments of users to workflow steps in $T$.
More formally, let $\pi :S' \rightarrow U$, where $S' \subseteq S$, be a plan.
Given $T \subseteq S'$, we write $\pi|_T$ to denote the function $\pi$ restricted to domain $T$; that is $\pi|_T(s) = \pi(s)$ for all $s \in T$ (and is undefined otherwise).
Then we say $\pi : S' \rightarrow U$ \emph{satisfies} a workflow constraint $(T,\Theta)$ if $T \not\subseteq S'$ or $\pi|_T \in \Theta$.
\begin{df}
Given a constrained workflow authorization schema $((S,E),U,A,C)$, we say a plan $\pi : S \rightarrow U$ is \emph{valid} if it satisfies every constraint in $C$ and, for all $t \in S$, $(t,\pi(t)) \in A$.
\end{df}

In practice, we do not define constraints by enumerating all possible elements of $\Theta$.
Instead, we define different families of constraints that have ``compact'' descriptions.
Thus, for example, we might define the family of simple separation-of-duty constraints, each of which is represented by a set $\set{t_1,t_2}$, the constraint being satisfied provided $t_1$ and $t_2$ are assigned to different users.

A constraint $(T,\Theta)$ is said to be \emph{user-independent} if, for every $\theta \in \Theta$ and every permutation $\phi : U \rightarrow U$, $\phi \circ \theta \in \Theta$, where 
\[
  \phi \circ \theta : T \rightarrow U\quad\text{and}\quad(\phi \circ \theta)(s) \stackrel{\rm def}{=} \phi(\theta(s)). 
\]
Simple separation-of-duty constraints are user-independent, and it appears most constraints that are useful in practice are user-independent~\cite{CoCrGaGuJo14}.
In particular, cardinality constraints and binding-of-duty constraints are user-independent.
In this paper, we restrict our experimental evaluation to two particular types of user-independent constraints (in addition to separation-of-duty constraints):
\begin{itemize}
 \item an \emph{at-least-$r$} \emph{counting} constraint has the form $(T,r)$, where $r \leqslant \card{T}$, and is satisfied provided at least $r$ users are assigned to the steps in $T$;
 \item an \emph{at-most-$r$} \emph{counting} constraint has the form $(T,r)$, where $r \leqslant \card{T}$, and is satisfied provided at most $r$ users are assigned to the steps in $T$.
\end{itemize}
It is important to stress that our approach works for any user-independent constraints.
We chose to use counting constraints because such constraints have been widely considered in the literature (often known as cardinality constraints).
Moreover, counting constraints can be encoded using mixed integer programming, so we can use off-the-shelf solvers to solve WSP and to compare with the performance of our bespoke algorithms.

We now introduce the workflow satisfiability problem, as defined by Wang and Li~\cite{WaLi10}.

\begin{center}
\fbox{%
      \begin{tabulary}{.95\columnwidth}{@{}r<{~}@{}L@{}}
        \multicolumn{2}{@{}l}{\sc Workflow Satisfiability Problem (WSP)}\\
        \emph{Input:} & A constrained workflow authorisation schema $((S,E),U,A,C)$\\
        \emph{Output:} & A valid 
      $\pi : S \rightarrow U$ or an answer that there exists no valid plan
       \end{tabulary}%
      }
\end{center}

\subsection{Fixed-Parameter Tractability of WSP}

A na\"ive approach to solving WSP would consider every possible assignment of users to steps in the workflow.
There are $n^k$ such assignments if there are $n$ users and $k$ steps, so an algorithm of this form would have complexity $O(mn^k)$, where $m$ is the number of constraints.
Moreover, Wang and Li showed that WSP is NP-hard, by reducing {\sc Graph $k$-Colorability} to WSP with separation-of-duty constraints~\cite[Lemma 3]{WaLi10}.

The importance of finding an efficient algorithm for solving WSP led Wang and Li to look at the problem from the perspective of \emph{parameterised complexity}.
Suppose we have an algorithm that solves an NP-hard problem in time $O(f(k)n^d)$, where $n$ denotes the size of the input to the problem, $k$ is some (small) parameter of the problem, $f$ is some function in $k$ only, and $d$ is some constant (independent of $k$ and $n$).
Then we say the algorithm is a \emph{fixed-parameter tractable} (FPT) algorithm.
If a problem can be solved using an FPT algorithm then we say that it is an \emph{FPT problem} and that it belongs to the class FPT~\cite{DoFe99,Ni06}.

Wang and Li observed that fixed-parameter algorithmics is an appropriate way to study the problem, because the number $k$ of steps is usually small and often much smaller than the number $n$ of users.\footnote{The SMV loan origination workflow studied by Schaad \etal, for example, has 13 steps and identifies five roles~\cite{ScLoSo06}. It is generally assumed that the number of users is significantly greater than the number of roles.} Wang and Li \cite{WaLi10}  proved that, in general, WSP is W[1]-hard and thus is highly unlikely to admit an FPT algorithm.
However, WSP is FPT if we consider only separation-of-duty and binding-of-duty constraints \cite{WaLi10}. Henceforth, we consider special families of constraints, but allow arbitrary authorisations.
Crampton \etal~\cite{CrGuYe13} obtained significantly faster FPT algorithms that were applicable to ``regular'' constraints, thereby including the cases shown to be FPT by Wang and Li.
Subsequent research has demonstrated the existence of FPT algorithms for {\sc WSP} in the presence of other constraint types~\cite{CrCrGuJoRa13,CrGu13}. Cohen \etal~\cite{CoCrGaGuJo14} 
introduced the class of user-independent constraints and showed that WSP remains FPT if only user-independent constraints are included. Note that every regular constraint is user-independent and all the constraints defined in the ANSI RBAC standard \cite{ansi-rbac04} are user-independent. Results of Cohen \etal~\cite{CoCrGaGuJo14}  have led to algorithms which significantly outperform the widely used SAT-solver SAT4J on difficult instances of WSP with user-independent constraints \cite{CoCrGaGuJo14c,KaGaGu}.

\subsection{Valued WSP}

There has been considerable interest in recent years in making the specification and enforcement of access control policies more flexible.
Naturally, it is essential to continue to enforce effective access control policies.
Equally, it is recognised that there may well be situations where a simple ``allow'' or ``deny'' decision for an access request may not be appropriate.
It may be, for example, that the risks of refusing an unauthorised request are less significant than the benefits of allowing it.
One obvious example occurs in healthcare systems, where the denial of an access request in an emergency situation could lead to loss of life.
Hence, there has been increasing interest in context-aware policies, such as ``break-the-glass'', which allow different responses to requests in different situations.
Risk-aware access control is another promising line of research that seeks to quantify the risk of allowing a request, where a decision of ``0'' might represent an unequivocal ``deny'' and ``1'' an unequivocal ``allow'', with decisions of intermediate values representing different levels of risk.

Similar considerations arise very naturally when we consider workflows.
In particular, we may specify authorization policies and constraints that mean a workflow specification is unsatisfiable.
Clearly, this is undesirable from a business perspective, since the business objective associated with the workflow can not be achieved.
There are two possible ways of dealing with an unsatisfiable workflow specification:%
\begin{inparaenum}[(i)]
 \item modify the authorization policy and/or constraints;
 \item find the ``least bad'' complete plan.
\end{inparaenum}
Prior work by Basin, Burri and Karjoth considered the former approach~\cite{BaBuKa12}.
They restricted their attention to modification of the authorisation policy, what they called \emph{administrable authorizations}.
They assigned costs to modifying different aspects of a policy and then computed a strategy to modify the policy of minimal cost.

We adopt a different approach and consider minimising the cost of ``breaking'' the policies and/or constraints.
(We will compare our approach to Basin \etal in the related work section.)
Informally, given a workflow specification, for each plan $\pi$, we define the total cost or weight associated with the plan $w(\pi)$.
The problem, then, is to find the complete plan with minimum total cost.

More formally, let $((S,E),U,A,C)$ be a constrained workflow authorization schema.
Let $\Pi$ denote the set of all possible plans from $S$ to $U$.
Then, for each $c \in C$, we define a weight function $w_c : \Pi \rightarrow \mathbb{Z}$, where
 \[ 
  w_c(\pi) 
   \begin{cases}
    = 0 & \text{if $\pi$ satisfies $c$}, \\
    > 0 & \text{otherwise}.
   \end{cases}
 \]
The pair $(c,w_c)$ is a \emph{weighted constraint}.

The intuition is that $w_c(\pi)$ represents the extent to which $\pi$ violates $c$.
Consider, for example, an at-most-$r$ counting constraint $(T,r)$.
Then $w_c(\pi)$ depends only on the number of users assigned to the steps in $T$ (and the penalty should increase as the number of users increases).
Let $\pi(T)$ denote the set of users assigned to steps in $T$.
Then $w_c(\pi) = 0$ if $\card{\pi(T)} \leqslant r$; for plans $\pi$ and $\pi'$, we have $w_c(\pi) = w_c(\pi')$ if $\card{\pi(T)} = \card{\pi(T')}$; and $0 < w_c(\pi) \le w_c(\pi')$ if $r < \card{\pi(T)} \le \card{\pi'(T)}$.
Similarly, for an at-least-$r$ constraint $c$ with scope $T$, we would have $w_{c}(\pi) = 0$ if $\card{\pi(T)} \geqslant r$; for plans $\pi$ and $\pi'$, we have $w_c(\pi) = w_c(\pi')$ if $\card{\pi(T)} = \card{\pi(T)}$; and $0 < w_{c}(\pi) \le w_{c}(\pi')$ if  $\card{\pi(T)} \ge \card{\pi'(T)} > 0$.

Then we define
\[
 w_C(\pi) = \sum_{c \in C} w_c(\pi),
\]
which we call the \emph{constraint weight} of $\pi$.
Note that $w_C(\pi) = 0$ if and only if $\pi$ satisfies all constraints in $C$.
Note also that $w_C(\pi)$ need not be defined to be the linear sum: $w_C(\pi)$ may be defined to be an arbitrary function of the tuple $(w_c(\pi):\ c\in C)$ and Theorem \ref{thm1} below would still hold. However, we will not use this generalisation in this paper, but simply remark that it is possible, if needed.

We next introduce a function $w_A : \Pi \rightarrow \mathbb{Z}$, which assigns a cost for each plan with respect to the authorisation policy.
The intuition is that a plan in which every user is authorised for the steps to which she is assigned has zero cost and the cost of a plan that violates the policy increases as the number of steps that are assigned to unauthorised users increases.
More formally, 
\[
 w_A(\pi) 
  \begin{cases}
   = 0 & \text{if $(\pi(t),t) \in A$ for all $t$}, \\
   > 0 & \text{otherwise}
  \end{cases}
\]
is the \emph{authorisation weight} of $\pi$.

The definition of $w_A$ can be arbitrarily fine-grained.
We could, for example, associate a weight $\omega(t,u)$ with every pair $(t,u)$, where a zero weight indicates that $u$ is authorised for $t$, and define
\[
 w_A(\pi) = \sum_{t \in T} \omega(t,\pi(t)).
\]
One particularly simple instantiation of this idea is to define a single weight $\omega > 0$ to be associated with every policy violation.
In this case, $w_A(\pi) = a \omega$, where $a$ is the number of steps that are assigned to unauthorised users.
Alternatively, we might distinguish between different types of users, so that, for example, assigning steps to external contractors is associated with a higher weight $\omega_e$ than the weight $\omega_i$ associated with assigning steps to (internal) unauthorised staff members.

We may now define the valued workflow satisfiability problem, which will be the subject of the remainder of the paper.

\begin{center}
\fbox{%
      \begin{tabulary}{.95\columnwidth}{@{}r<{~}@{}L@{}}
        \multicolumn{2}{@{}l}{\sc Valued WSP}\\
        \emph{Input:} & A constrained workflow authorisation schema $((S,E),U,A,C)$ with weights for constraints and authorisations, as above.\\
        \emph{Output:} & A plan $\pi : S \rightarrow U$ that minimises $w(\pi)=w_C(\pi) + w_A(\pi)$.
       \end{tabulary}%
      }
\end{center}
%

Before proceeding further, however, we introduce a weight function that is more fine-grained than those considered above, and the one that we shall use in the remainder of this paper.
Specifically, for each user $u$ and each subset $T$ of $S$, we define a weight $\omega(T,u) \in \mathbb{Z}$, where
\[
 \omega(T,u) 
  \begin{cases}
   = 0 & \text{if $(u,t) \in A$ for all $t \in T$}, \\
   > 0 & \text{otherwise}.
  \end{cases}
\]
We call $\omega: 2^S \times U$ the (\emph{weighted}) \emph{set-authorisation} function.
Vacuously, we have $\omega(\emptyset,u) = 0$ for all $u \in U$.
We write $\pi^{-1}(u)$ to denote the set of steps to which $u$ is assigned by plan $\pi$.
Then we define
\[
 w_A(\pi) = \sum_{u \in U}  \omega(\pi^{-1}(u),u).
\]
Clearly, this form of authorisation weight satisfies the required criteria.

We base $w_A$ on weights of the form $\omega(T,u)$ because, in addition to allowing us to specify weights for every pair $(t,u)$ if required, it allows us to express more complex (``non-linear'') costs on plans.
For example:
\begin{enumerate}
 \item We can introduce a large penalty $\omega(T,u)$, effectively saying we prefer not to involve $u$ in steps in $T$. (We use weights like this in our experimental work, described in Section~\ref{sec:experiments}.)
 \item We can define a limit $\ell$ on the number of steps that can be executed by $u$, by setting a large penalty $\omega(T,u)$ for all $T$ of cardinality greater than $\ell$.
 \item We can attempt to minimise the number of involved users by giving a small penalty for assigning a user to at least one step.  This is similar to item 1 above, albeit with a different goal.
 \item The weights associated with the same user executing different steps may not increase linearly.
 Once a user has performed one particular unauthorized step, the additional cost of executing a \emph{related} unauthorized step may be reduced, while the additional cost of executing an \emph{unrelated} unauthorized step may be the same as the original cost.
 Our formulation enables us to model this kind of situation.
 \item We can implement separation-of-duty on a per-user basis, which is not possible with user-independent constraints.  In particular, it may be acceptable for $u_1$ to perform steps $s_1$ and $s_2$, but not $u_2$, in which case $\omega(\{s_1,s_2\},u_1)$ would be small, while $\omega(\{s_1,s_2\},u_2)$ would be large.
\end{enumerate}

The next claim is an important observation following directly from the definitions above.

\begin{prop}\label{pro}
The optimal weight of an instance of {\sc Valued WSP}  equals zero if and only if the corresponding WSP instance is satisfiable.
\end{prop}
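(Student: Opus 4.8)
The plan is to show both implications directly from the definitions of $w_C$, $w_A$, and validity. The key observation is that $w(\pi) = w_C(\pi) + w_A(\pi)$ is a sum of non-negative integers, so $w(\pi) = 0$ if and only if both $w_C(\pi) = 0$ and $w_A(\pi) = 0$.

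First I would recall that, by definition of each weighted constraint $(c,w_c)$, we have $w_c(\pi) = 0$ precisely when $\pi$ satisfies $c$; since $w_C(\pi) = \sum_{c \in C} w_c(\pi)$ is a sum of non-negative terms, $w_C(\pi) = 0$ if and only if $\pi$ satisfies every constraint in $C$ (this was already remarked in the text immediately after the definition of $w_C$). Similarly, using the set-authorisation form $w_A(\pi) = \sum_{u \in U} \omega(\pi^{-1}(u),u)$ with each $\omega(T,u) \geqslant 0$, we have $w_A(\pi) = 0$ if and only if $\omega(\pi^{-1}(u),u) = 0$ for every $u \in U$, which by definition of $\omega$ holds if and only if $(u,t) \in A$ for every $u$ and every $t \in \pi^{-1}(u)$, i.e.\ $(t,\pi(t)) \in A$ for all $t \in S$. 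Hence $w(\pi) = 0$ if and only if $\pi$ satisfies every constraint in $C$ and respects the authorisation policy $A$, which is exactly the definition of $\pi$ being a \emph{valid} plan.

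For the forward direction, if the optimal weight of the \vwsp\ instance is zero, then there is a complete plan $\pi : S \to U$ with $w(\pi) = 0$, and by the equivalence above $\pi$ is valid; therefore the corresponding WSP instance is satisfiable. Conversely, if the WSP instance is satisfiable there is a valid complete plan $\pi$, and by the same equivalence $w(\pi) = 0$; since all weights are non-negative, $w$ cannot take a negative value, so the optimal (minimum) weight over all complete plans is exactly zero.

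I do not expect any genuine obstacle here: the statement is essentially a restatement of the definitions, and the only point requiring a moment's care is confirming that both $w_C$ and $w_A$ are non-negative so that their sum vanishes exactly when each summand does — this is immediate from the case distinctions in the definitions of $w_c$ and $\omega$. (If one uses the more general $w_C$ mentioned in the text, one would additionally need that the aggregating function of $(w_c(\pi) : c \in C)$ vanishes exactly when all its arguments do, but this is not needed for the formulation used in the paper.)
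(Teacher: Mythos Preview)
Your proof is correct and follows exactly the approach the paper intends: the paper does not give a separate proof but simply states that the proposition is ``an important observation following directly from the definitions above,'' and your argument spells out precisely those definitions and the non-negativity observation needed to make the equivalence precise.
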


\section{Solving Valued WSP with User-Independent Constraints}\label{sec:solving-vwsp}

In Section \ref{sec:patterns} we introduce the notion of weighted user-independent constraints and prove that {\sc Valued WSP} with only user-independent constraints is fixed-parameter tractable (FPT). In Section \ref{sec:branch-and-bound}, we describe an FPT algorithm to solve  {\sc Valued WSP} with user-independent constraints.

\subsection{Weighted User-Independent Constraints and Patterns}\label{sec:patterns}

A weighted constraint $c$ is called {\em user-independent}  if, for every permutation $\theta$ of $U$, $w_C(\pi)=w_C(\theta\circ\pi)$. Thus, a weighted user-independent constraint does not distinguish between users. 
Any (weighted) counting constraint for which the weight of plan $\pi$ is defined in terms of the cardinality of the image of $\pi$ is user-independent.

Given a plan $\pi:\ S'\rightarrow U$, where $S'\subseteq S$, the {\em pattern} $P(\pi)$ of $\pi$ is the partition $\{\pi^{-1}(u):\ u\in U,\ \pi^{-1}(u)\neq \emptyset\}$ of $S'$ into non-empty sets. We say that two plans $\pi$ and $\pi'$ are {\em equivalent} if they have the same pattern. If all constraints are user-independent and $\pi$ and $\pi'$ are equivalent, then $w_C(\pi)=w_C(\pi')$. A pattern is said to be {\em complete} if $S'=S$.

Generalising the corresponding result for WSP with user-independent constraints \cite[Theorem 2]{CoCrGaGuJo14}, we can prove the following theorem, which uses weighted set-authorisation. 
We assume%
\begin{inparaenum}[(i)]
 \item that the weight of each assignment can be computed in time polynomial in the number of steps, users and constraints (denoted $k$, $n$ and $m$, respectively);
 \item we can determine whether a plan satisfies a constraint in time polynomial in the number of steps and users.
\end{inparaenum}

\begin{thm}\label{thm1}
{\sc Valued WSP} in which all constraints are user-independent can be solved in time $2^{k\log k}(k+n+m)^{O(1)}$. Thus, {\sc Valued WSP} with user-independent constraints is FPT. 
\end{thm}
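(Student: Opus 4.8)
The plan is to generalise the pattern-enumeration strategy used for WSP with user-independent constraints in \cite[Theorem 2]{CoCrGaGuJo14}. The key observation, already recorded before the theorem, is that for user-independent (weighted) constraints the constraint weight $w_C(\pi)$ depends only on the pattern $P(\pi)$. I would therefore iterate over all complete patterns, i.e.\ all partitions of $S$ into non-empty blocks. There are $B_k$ such partitions, where $B_k$ is the $k$-th Bell number, and $B_k \le k^k = 2^{k\log k}$; moreover one can list all of them (e.g.\ via restricted-growth strings) within a factor polynomial in $k$ of their number. Any pattern with more than $n$ blocks cannot be the pattern of a plan $\pi : S \to U$ (a plan uses at most $n$ distinct users), so such patterns are discarded.

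For a fixed pattern $P = \{B_1,\dots,B_p\}$ with $p \le n$ I would compute the minimum of $w(\pi) = w_C(\pi) + w_A(\pi)$ over all complete plans $\pi$ with $P(\pi) = P$ as follows. First, $w_C$ is constant over this set of plans and equals $w_C(\pi_0)$ for any representative plan $\pi_0$ obtained by assigning the blocks of $P$ to $p$ arbitrary distinct users; by assumptions (i) and (ii) this value is computable in time polynomial in $k, n, m$. Second, a plan with pattern $P$ is precisely an assignment of pairwise distinct users $u_1,\dots,u_p$ to the blocks $B_1,\dots,B_p$, and for such a plan $w_A(\pi) = \sum_{i=1}^p \omega(B_i,u_i)$ (users not involved contribute $\omega(\emptyset,u)=0$). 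Hence minimising $w_A$ over plans with pattern $P$ is exactly a minimum-weight bipartite matching (assignment) problem between the $p$ blocks and the $n$ users that saturates all blocks; since $p \le n$ it is feasible, and the Hungarian algorithm solves it in time polynomial in $k$ and $n$ (the required weights $\omega(B_i,u)$ are poly-time computable by assumption (i)). The minimum value of $w$ attainable with pattern $P$ is then $w_C(\pi_0) + w_A^{\min}(P)$.

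Taking the best pattern over all $\le B_k$ candidates, and reconstructing the corresponding plan from the optimal matching, yields an optimal solution to {\sc Valued WSP}; by Proposition~\ref{pro} its value is $0$ exactly when the associated WSP instance is satisfiable, consistent with the special case in \cite{CoCrGaGuJo14}. The overall running time is $B_k$ times a polynomial in $k+n+m$, i.e.\ $2^{k\log k}(k+n+m)^{O(1)}$, which establishes fixed-parameter tractability with parameter $k$.

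As for the main obstacle: the enumeration bound and the user-independence of $w_C$ are routine, so the crux is handling the authorisation weight $w_A$. Unlike in plain WSP, where one only needs to decide whether a pattern admits \emph{some} authorised realisation (a feasibility/matching question), here one must optimise over realisations; the point to get right is that, because $w_A$ decomposes as a sum $\sum_{u} \omega(\pi^{-1}(u),u)$ of per-user terms and the sets $\pi^{-1}(u)$ are fixed once the pattern is fixed, this optimisation is a polynomially solvable assignment problem rather than something harder. One should also be careful that patterns with more than $n$ blocks are infeasible and are simply skipped, so that every pattern actually processed yields a feasible matching.
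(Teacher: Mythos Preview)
Your proposal is correct and follows essentially the same approach as the paper: enumerate all complete patterns (at most $B_k \le 2^{k\log k}$ of them), use user-independence to fix $w_C$ per pattern, and reduce the minimisation of $w_A$ over plans realising a given pattern to a minimum-weight bipartite matching solved by the Hungarian method. Your explicit handling of patterns with more than $n$ blocks and your use of the set-weight $\omega(B_i,u)$ for the edge costs are, if anything, slightly cleaner than the paper's presentation, but the argument is the same.
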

\begin{proof}
For a positive integer $x$, let $[x]=\{1,\dots ,x\}.$ 
Recall that for equivalent complete plans $\pi$ and $\pi'$, we have $w_C(\pi)=w_C(\pi').$ 
However, $w_A(\pi) = \sum_{u\in U} w_A(\pi^{-1}(u),u)$ is, in general, different from $w_A(\pi')$ and so we must compute the minimum value of $w_A(\pi)$ among all equivalent complete plans $\pi$. 
To do so efficiently, for a complete plan $\pi$, we first construct a weighted complete bipartite graph $G_{\pi}$ with partite sets $[p]$ and $U$, where $p=|P(\pi)|$ as follows. Let $P(\pi)=\{T_1,\ldots, T_p\}.$ The weight of an edge $\{q,u\}$ is $\sum_{s\in T_q}\omega(s,u).$

Now observe that $G_{\pi'}=G_{\pi}$ for every pair $\pi,\pi'$ of equivalent complete plans and that $w_A(\pi)$ equals the weight of the corresponding matching of $G_{\pi}$ covering all vertices of $[p]$. Hence, it suffices to find such a matching of $G_{\pi}$ of minimum weight, which can be done by the Hungarian method~\cite{Ku55} in time $O(n^3)$.

Observe that the number of partitions of the set $[k]$ into non-empty subsets, called the Bell number $B_k$, is smaller than $k!$ and there are algorithms of running time $O(B_k)=O(2^{k\log k})$  to generate all partitions of $[k]$ \cite{Er88}. Thus, we can generate all patterns in time $O(2^{k\log k})$. For each of them we compute the corresponding complete plan of minimum weight, and, among all such plans,  we choose the one of smallest weight. The total running time is  $O(2^{k\log k}(k+n+m)^{O(1)})$.\end{proof}

Cohen \etal proved~\cite{CoCrGaGuJo14} that WSP with user-independent constraints cannot be solved in time $2^{o(k\log k)}(k+n+m)^{O(1)}$, unless the widely believed Exponential Time Hypothesis\footnote{The Exponential Time Hypothesis claims there is no algorithm of running time $O^*(2^{o(n)})$ for 3SAT on $n$ variables~\cite{ImPaZa01}.} fails. This and Proposition~\ref{pro} imply that the FPT algorithm of Theorem \ref{thm1} is optimal, in a sense.

\subsection{Pattern Branch and Bound}\label{sec:branch-and-bound}

We present a branch-and-bound algorithm (Algorithm~\ref{alg:pb}) for the \textsc{Valued WSP}\@.
This algorithm is inspired by the Pattern Backtracking algorithm for the WSP~\cite{KaGaGu}. However, the original algorithm solves a decision problem, whereas the \textsc{Valued WSP} is an optimisation problem.
Thus our algorithm for \vwsp requires a completely different algorithmic framework.
We call our algorithm for the \textsc{Valued WSP} Pattern Branch and Bound (PBB).
Given a pattern $P$, we will write $T(P) = \bigcup_{p \in P} p$ to denote the set $T \subseteq S$ on which $P$ is the partition.


\begin{algorithm}[htb]
\caption{Entry point of the PBB algorithm.}
\label{alg:pb}

\Input{\textsc{Valued WSP} instance $W$}
\Output{The optimal complete plan for $W$}

$\lambda \gets \lambda(W)$, where $\lambda(W)$ is the global lower bound\; \label{line:global-lb}
$\pi \gets H(W)$, where $H(W)$ is a \textsc{Valued WSP} heuristic\; \label{line:heuristic}
\Return {$\mathit{explore}(\emptyset, \pi, \lambda)$}\;
\end{algorithm}

\begin{algorithm}[htb]
\caption{Recursive procedure $\mathit{explore}(P, \pi^*, \lambda)$ of the PBB algorithm.}
\label{alg:explore}

\Input{Node $P$ of the search tree; best plan $\pi^*$ found so far; the global lower bound $\lambda$}
\Output{The best plan found after exploring this branch of search}

\eIf {$P$ is a complete pattern}
{
    Let $\pi'$ be an optimal plan with pattern $P$\;
    $\pi'' \gets \arg\min_{\pi \in \{ \pi', \pi^* \}} w(\pi)$\;
}
{
    $\pi'' \gets \pi^*$\;
    Select $s \in S \setminus T(P)$ maximising $I(P, s)$\; \label{line:select-s}
    \ForAll {extensions $P'$ of $P$ with step $s$}
    {
        \If {$L(P') < w(\pi'')$ and $w(\pi'') > \lambda$}
        {
            Let $\pi' \gets \mathit{explore}(P', \pi'', \lambda)$\;
            $\pi'' \gets \arg\min_{\pi \in \{ \pi', \pi'' \}} w(\pi)$\;
        }
    } \label{line:extend-pattern}
    
}

\Return {$\pi''$}\;
\end{algorithm}

The general idea of the algorithm follows the proof of Theorem~\ref{thm1}.
The algorithm explores the space of patterns with a depth-first search and for each complete pattern $P$ seeks an optimal plan $\pi$ such that $P(\pi) = P$ (recall that such a plan can be found efficiently).
Each node of the search tree is a pattern, with the root being an empty pattern and leaves being complete patterns.
In each non-leaf node $P$, the algorithm selects a step $s \in S$ such that $s \not\in T(P)$ (line~\ref{line:select-s} of Algorithm~\ref{alg:explore}), and generates one child node for each possibility to extend $P$ with $s$ (line~\ref{line:extend-pattern}).
By extensions of $P$ with step $s$ we mean patterns $P'$ obtained from $P$ by adding $s$ to one of the subsets $p \in P$ or adding a new subset $\{ s \}$ to the partition; hence, there are $|P| + 1$ extensions of $P$.

Like any branch-and-bound algorithm, PBB utilises a lower bound $L(P)$ for pruning branches.
The lower bound $L(P)$ in a node $P$ is computed as follows:
$$
L(P) = \sum_{c \in C} L_c(P) + \sum_{p \in P} \min_{u \in U} \omega(p, u) \,,
$$
where $L_c(P)$ is the lower bound of $w_c(\pi)$, where $\pi$ is an extension of a plan with pattern $P$.
The implementation of $L_c(P)$ depends on the constraint type.
For example, for a counting constraint $c$ with the scope $T$ and weight function $w_c(\pi) = \omega_c(|\pi(T)|)$, the lower bound can be computed as $L_c(P) = l(q, a)$, where $q = |\{ p \in P :\ p \cap T \neq \emptyset \}|$, \mbox{$a = |T \cap T(P)|$} and $l(q, a)$ is the following recursive function:
\[
    l(q, a) = \begin{cases}
        \omega_c(q) & \text{if } a = |T|, \\
        \min \{ l(q, a + 1), l(q + 1, a + 1) \}
            & \text{otherwise}.
    \end{cases} 
\]

Other speed-ups implemented in the PBB algorithm include a global lower bound $\lambda(W)$ (line~\ref{line:global-lb} of Algorithm~\ref{alg:pb}) and the heuristic $H(W)$ to obtain a good upper bound from the very beginning of the search (line~\ref{line:heuristic} of Algorithm~\ref{alg:pb}).
In a simple implementation, the global lower bound $\lambda(W)$ could be a constant function $\lambda(W) = 0$; that would terminate the algorithm as soon as a complete plan satisfying all the constraints and authorisations is found.
The heuristic algorithm can be as simple as a trivial plan assigning some user to all the steps $S$.

In our implementation, however, we translate the \textsc{Valued WSP} instance $W$ into WSP instances and solve them to obtain better global lower bound and upper bound.
Let $\text{WSP}(W, x)$ be a WSP instance obtained from $W$ by eliminating all the constraints and authorisations with penalties below $x$ and converting the rest of the constraints and authorisations into hard constraints.
By solving $\text{WSP}(W, x)$ we establish either the global lower bound or the upper bound.
If $\text{WSP}(W, x)$ is unsatisfiable, we conclude that there exists no complete plan $\pi$ such that $w(\pi) < x$ and, hence, $\lambda(W) = x$.
Otherwise, the plan valid in $\text{WSP}(W, x)$ can be used for an upper bound in $W$.
We start from solving the $\text{WSP}(W, 1)$ and, if it turns out to be unsatisfiable, we solve the $\text{WSP}(W, M)$, where $M$ is a large enough number, which usually gives us a good upper bound as it rules out plans that break highly-penalised constraints.

The order in which patterns are extended (a step at a time) makes no difference to the worst-case time complexity of the algorithm but is crucial to its  performance in practice~\cite{KaGaGu}.
It is defined by the `importance' function $I(P, s)$, the intention being to focus on the most important steps as early as possible to quickly prune fruitless branches of the search.
The importance of a step mostly depends on the constraints that include the step in their respective scopes.
For example, if a step is involved in several separation-of-duty constraints, adding it to the pattern may significantly reduce the search space and possibly result in increased penalties for some at-most constraints.
Another example is if most of the steps of some constraint's scope are assigned, and adding the remaining steps to the pattern may have a severe effect on the penalty for that constraint.
The `importance' metric is context-dependent, i.e.\ the order of steps needs to be determined dynamically in each branch of the search tree.

The `importance' function $I(P,s)$ is a heuristic expression which we parametrised and optimised by an automated parameter tuning method.
Our function $I(P, s)$ takes into account the number and types of the constraints in which the step is involved.
In addition, it accounts for the constraints with incomplete scopes.
Finally, we check intersections of `conflicting' constraints such as at-most and not-equals or at-most and at-least.



As shown in the proof of Theorem~\ref{thm1}, finding the optimal assignment of users given a fixed pattern can be done in $O(n^3)$ time (if computing $w_C(\pi)$ takes $O(n^3)$ time and computing $\omega(T,u)$ takes $O(n^3 / k)$ time).
Each non-leaf node of the search tree has at least two child nodes and, hence, the size of the search tree is within $O(B_k)$.
Then the worst case time complexity of the PBB algorithm is $O(B_k n^3)$.


With the exception of the `step importance' function $I(P, s)$, which is easy to adjust for any type of instances, and the lower bound $L(P)$, our algorithm is a generic solver for the user-independent \textsc{Valued WSP}\@.
For example, it does not exploit the specifics of the counting constraints which could be used to preprocess problem instances~\cite{CoCrGaGuJo14}.
This shows that our approach is generic, easy to implement and its performance can be further improved by implementing instance-specific speed-ups.

\section{Experimental Results}\label{sec:experiments}

The pseudo-Boolean SAT solver SAT4J has been used to solve the \textsc{WSP}~\cite{WaLi10}.
Recent work has demonstrated that a bespoke pattern-based algorithm can outperform SAT4J in solving WSP~\cite{CoCrGaGuJo14c,KaGaGu}.
Integer linear programming has been used by Basin \etal to solve the \emph{allocation existence problem}~\cite{BaBuKa12}, which is related to \vwsp.  
In this section, we describe the experimental work on \vwsp that we have undertaken.
In particular, we will compare the performance of our PBB algorithm to that of the state-of-art commercial MIP solver in our computational experiments on \vwsp.  
We first describe the problem instances we used and how we represented \vwsp as a mixed integer programming (MIP) problem.
We then present the results of our experiments.

\subsection{Benchmark Instances}
\label{sec:benchmark}

We use a pseudo-random instance generator to produce benchmark instances.
Our generator is an extension of an existing instance generator for WSP~\cite{KaGaGu}.
The parameters of the generator are the number $k$ of steps in the instance, the not-equals constraints density $d$ in the range 0--100\%, the multiplier $\alpha$ for the number of constraints and the seed value for initialisation of the pseudo-random number generator.
The generator produces an instance with $k$ steps and $10k + 10$ users: $10k$ employees $U'$ and 10 external consultants $U''$\@.
The penalty for assigning steps $T \subseteq S$ to an employee $u \in U'$ is given by
$$
\omega(T,u) = |T \cap B| \cdot 10 + |T \setminus (A \cup B)| \cdot 10^6 \,,
$$
where $A \subset S$ and $B \subset S$ are selected uniformly at random from $S$, with $A\cap B=\emptyset$ and $|A|$  being selected uniformly at random from $[1, \lceil (k-4)/2 \rceil ]$, and $|B| = 2$.
The penalty for assigning steps $T \subseteq S$ to an external consultant $u \in U''$ is given by
$$
\omega(T,u) = \begin{cases}
0 & \text{if } T = \emptyset, \\
20 & \text{if } T \neq \emptyset \text{ and } T \subseteq A, \\
10^6 \cdot |T \setminus A| & \text{if } T \subseteq S \setminus A, \\
10^6 \cdot |T \setminus A| + 20 & \text{otherwise},
\end{cases}
$$
where $A \subset S$ is selected uniformly at random having selected $|A|$ uniformly at random from $[1, \lceil k/4 \rceil]$.

Further, $\left\lfloor (d k(k - 1) + 1)/2 \right\rfloor$ distinct not-equals constraints are produced uniformly at random, each with penalty $10^6$ for assigning one user to both steps.
Finally, $\alpha k$ at-most-3 constraints and $\alpha k$ at-least-3 constraints are generated uniformly at random.
The scopes of all the at-most-3 and at-least-3 constraints are set to 5 steps.
The at-least-3 penalties are defined as $\omega_c(1) = 10^6$, $\omega_c(2) = 1$, $\omega_c(3) = \omega_c(4) = \omega_c(5) = 0$.
The at-most-3 penalties are defined as $\omega_c(1) = \omega_c(2) = \omega_c(3) = 0$, $\omega_c(4) = 5$ and $\omega_c(5) = 10$.

\subsection{Mixed Integer Programming Formulation}
\label{sec:mip}

In order to use an MIP solver, we propose an efficient MIP formulation of the \vwsp.
Note that the MIP formulation is specific to the particular constraints present in the instances, unlike the PBB algorithm.  
In this section we describe an MIP formulation for the instances described in Section~\ref{sec:benchmark}.

Let $C = C_{\le} \cup C_{\ge}$, where $C_{\le}$ is the set of at-most-$r$ constraints and $C_{\ge}$ is the set of at-least-$r$ constraints.
(Note that not-equals constraints can be modelled as at-least-$2$ constraints with the scope of two steps.)
For each constraint $c \in C$ we are given its scope $T_c \subseteq S$, 
the minimum (maximum, respectively) number $r_c$ of users that can be assigned to $c \in C$ such that the at-most (at-least, respectively) constraint $c$ is satisfied and the penalty $\omega_c(q)$ for assigning $q$ distinct users to $T_c$ (note that $\omega_c(r_c) = 0$).

For each employee $u \in U'$ and each step $s \in S \setminus A(u)$ we are given an additive weight $\omega_{su} > 0$ of assigning $u$ to $s$, which models the penalties for steps in both $B(u)$ and $S \setminus (A(u) \cup B(u))$.
For each consultant $u \in U''$ we are given a set of steps $A(u) \subseteq S$, any of which can be assigned to $u$ for a penalty $\omega_u > 0$, and a weight $\Omega_u > 0$ for assigning a step $s \in S \setminus A(u)$ to $u$.

The complete plan in our formulation is defined by binary decision variables $x_{su}$, $s \in S$, $u \in U$.
Variable $x_{su}$ takes value 1 if step $s$ is assigned to user $u$ and 0 otherwise.
The \vwsp is then encoded in~(\ref{eq:mip-objective})--(\ref{eq:mip-z}):
\begin{flalign}
\label{eq:mip-objective}
\text{minimise}~&
 \sum_{c \in C_{\le}} \sum_{q = r_c + 1}^{|T_c|} (\omega_c(q) - \omega_c(q - 1)) p_{cq}
\nonumber &&\\
&	+ \sum_{c \in C_{\ge}} \sum_{q = 1}^{r_c - 1} (\omega_c(q) - \omega_c(q + 1)) p_{cq} \nonumber &&\\
&	+ \sum_{u \in U'} \sum_{s \in S \setminus A(u)} \omega_{su} x_{su} \nonumber &&\\
&	+ \sum_{u \in U''} \omega_u z_u
	+ \sum_{u \in U''} \Omega_u \sum_{s \in S \setminus A(u)} x_{su} &&
\end{flalign}
subject to
\begin{flalign}
\label{eq:mip-one-user-per-step}
& \sum_{u \in U} x_{su} = 1 \text{ for } s \in S, &&\\
\label{eq:mip-at-most}
& \sum_{u \in U} y_{cu} - \sum_{q = r_c + 1}^{|T_c|} p_{cq} \le r_c \text{ for each } c \in C_{\le}, &&\\
\label{eq:mip-at-most-order}
& p_{cq} - p_{c,q+1} \ge 0 \text{ for } c \in C_{\le} \text{ and } q = r_c + 1, \ldots, |T_c| - 1, &&\\
\label{eq:mip-at-least}
& \sum_{u \in U} y_{cu} + \sum_{q = 1}^{r_c - 1} p_{cq} \ge r_c \text{ for each } c \in C_{\ge}, &&\\
\label{eq:mip-at-least-order}
& p_{cq} - p_{c,q+1} \le 0 \text{ for } c \in C_{\ge} \text{ and } q = 1, 2, \ldots, r_c - 2, &&\\
\label{eq:mip-at-most-y}
& y_{cu} - x_{su} \ge 0 \text{ for each } c \in C_{\le},\ u \in U \text{ and } s \in T_c, &&\\
\label{eq:mip-at-least-y}
& y_{cu} - \sum_{s \in T_c} x_{su} \le 0 \text{ for each } c \in C_{\ge} \text{ and } u \in U, &&\\
\label{eq:mip-z-constraint}
& z_u - x_{su} \ge 0 \text{ for each } u \in U'' \text{ and } s \in A(u), &&\\
\label{eq:mip-x}
& x_{su} \in \{ 0, 1 \} \text{ for } s \in S \text{ and } u \in U, &&\\
\label{eq:mip-y}
& 0 \le y_{cu} \le 1 \text{ for } c \in C \text{ and } u \in U, &&\\
\label{eq:mip-p-at-most}
& p_{cq} \in \{ 0, 1 \} \text{ for } c \in C_{\le} \text{ and } q = r_c + 1, \ldots, |T_c|, &&\\
\label{eq:mip-p-at-least}
& p_{cq} \in \{ 0, 1 \} \text{ for } c \in C_{\ge} \text{ and } q = 1, 2, \ldots, r_c - 1, &&\\
\label{eq:mip-z}
& 0 \le z_u \le 1 \text{ for } u \in U''.&&
\end{flalign}

In addition to binary variables $x_{su}$, we introduce some other variables.
Binary variables $y_{cu}$, $c \in C$, $u \in U$ determine if user $u$ is assigned to some steps in the scope $T_c$ of constraint $c$.
Since $y_{cu}$ for $c \in C_{\le}$ is minimised and it is limited from below by binary expressions (\ref{eq:mip-at-most-y}), its integrality constraint can be waived.
Since $y_{cu}$ for $c \in C_{\ge}$ is maximised and it is limited from above by binary expressions (\ref{eq:mip-at-least-y}), its integrality constraint can also be waived.
Similar logic applies to $z_u$, which indicates if the consultant $u \in U''$ is assigned any steps in $B(u)$.
Finally, we introduce the binary variables $p_{cq}$ for each $c \in C$ and $q \in \mathbb{N}$ such that $w_c(q) > 0$.
These variables are responsible for the constraint penalties and (with appropriate limitations imposed on the instances, as our instance generator does) the integrality of
$p_{cq}$ and constraints~\eqref{eq:mip-at-most-order} and~\eqref{eq:mip-at-least-order} can be waived.

The objective function~(\ref{eq:mip-objective}) is the weight of the plan defined by $x_{su}$, and thus our aim is to minimise it.

\subsection{Experimental Results}
\label{sec:experimental-results}

We conducted a series of computational experiments to test the performance of the \textsc{Valued WSP} solution methods.
Our test machine is powered by two Intel Xeon CPU's E5-2630~v2 (2.6~GHz) and has 32 GB RAM installed.
The PBB algorithm is implemented in C\#, and the MIP formulation is solved with CPLEX~12.6.
In all our experiments, each solver run is allocated exactly one physical CPU core.
Each result is reported as an average over 100 runs for 100 instances obtained by changing the random generator seed value.

\begin{table*}[htb] \centering
\begin{tabular}{@{} r r r @{} c @{} r @{} c @{} r @{} c @{} r r @{} c @{} r r @{} c @{} r r @{} c @{} r r @{} c @{} r r @{}}
\toprule
&&&&&&&&\multicolumn{2}{@{}c@{}}{Solved}&&\multicolumn{2}{@{}c@{}}{Time, sec}&&\multicolumn{2}{@{}c@{}}{$w_C(\pi)$}&&\multicolumn{2}{@{}c@{}}{$w_A(\pi)$}&&\multicolumn{2}{@{}c@{}}{Best $w(\pi)$}\\
\cmidrule(){9-10}
\cmidrule(){12-13}
\cmidrule(){15-16}
\cmidrule(){18-19}
\cmidrule(){21-22}
$k$&$d$&$\alpha$&\hspace*{2em}&Sat.&\hspace*{2em}&$w(\pi)$&\hspace*{2em}&PBB&MIP&\hspace*{2em}&PBB&MIP&\hspace*{2em}&PBB&MIP&\hspace*{2em}&PBB&MIP&\hspace*{2em}&PBB&MIP\\
\midrule
20&10\%&0.50&&100\%&&0.0&&100\%&100\%&&\underline{0.0}&5.9&&0.0&0.0&&0.0&0.0&&---&---\\
20&20\%&0.50&&90\%&&0.4&&100\%&100\%&&\underline{0.0}&19.8&&0.4&0.4&&0.0&0.0&&---&---\\
20&30\%&0.50&&37\%&&4.0&&100\%&100\%&&\underline{0.0}&65.0&&3.8&4.0&&0.2&0.0&&---&---\\
20&10\%&1.00&&18\%&&4.4&&100\%&100\%&&\underline{0.1}&556.0&&4.3&4.2&&0.1&0.2&&---&---\\
20&20\%&1.00&&0\%&&14.2&&100\%&100\%&&\underline{0.1}&532.9&&13.5&13.8&&0.7&0.4&&---&---\\
20&30\%&1.00&&0\%&&24.3&&100\%&100\%&&\underline{0.1}&469.9&&23.4&23.5&&0.9&0.8&&---&---\\
[1ex]
25&10\%&0.50&&100\%&&0.0&&100\%&100\%&&\underline{0.0}&32.0&&0.0&0.0&&0.0&0.0&&---&---\\
25&20\%&0.50&&93\%&&0.4&&100\%&100\%&&\underline{0.0}&102.2&&0.4&0.4&&0.0&0.0&&---&---\\
25&30\%&0.50&&27\%&&5.0&&100\%&100\%&&\underline{0.0}&319.3&&5.0&5.0&&0.0&0.0&&---&---\\
25&10\%&1.00&&40\%&&2.3&&100\%&39\%&&0.3&?&&2.3&?&&0.0&?&&---&4.1\\
25&20\%&1.00&&0\%&&14.3&&100\%&66\%&&0.8&?&&13.7&?&&0.6&?&&---&14.9\\
25&30\%&1.00&&0\%&&29.9&&100\%&95\%&&2.0&?&&28.9&?&&1.0&?&&---&29.9\\
[1ex]
30&10\%&0.50&&100\%&&0.0&&100\%&100\%&&\underline{0.0}&72.2&&0.0&0.0&&0.0&0.0&&---&---\\
30&20\%&0.50&&88\%&&0.6&&100\%&99\%&&0.0&?&&0.6&?&&0.0&?&&---&0.6\\
30&30\%&0.50&&24\%&&5.7&&100\%&99\%&&0.1&?&&5.7&?&&0.0&?&&---&5.7\\
30&10\%&1.00&&6\%&&5.8&&100\%&3\%&&5.1&?&&5.7&?&&0.1&?&&---&14.9\\
30&20\%&1.00&&0\%&&22.8&&100\%&7\%&&36.3&?&&22.4&?&&0.4&?&&---&31.5\\
30&30\%&1.00&&0\%&&43.5&&100\%&31\%&&173.7&?&&43.2&?&&0.3&?&&---&52.8\\
[1ex]
35&10\%&0.50&&100\%&&0.0&&100\%&100\%&&\underline{0.0}&195.9&&0.0&0.0&&0.0&0.0&&---&---\\
35&20\%&0.50&&91\%&&0.4&&100\%&89\%&&0.2&?&&0.4&?&&0.0&?&&---&0.6\\
35&30\%&0.50&&13\%&&7.7&&100\%&68\%&&18.5&?&&7.7&?&&0.0&?&&---&8.0\\
35&10\%&1.00&&3\%&&6.2&&100\%&0\%&&64.7&?&&6.2&?&&0.0&?&&---&33.5\\
35&20\%&1.00&&?&&?&&92\%&0\%&&?&?&&?&?&&?&?&&29.6&130071.1\\
35&30\%&1.00&&?&&?&&48\%&0\%&&?&?&&?&?&&?&?&&60.1&2990104.6\\
\bottomrule
\end{tabular}
\caption{Comparison of the PBB and MIP solvers, each being given one hour per instance}
\label{tab:results}
\end{table*}

Main computational results are reported in Table~\ref{tab:results}.
The columns $k$, $d$ and $\alpha$ indicate the parameters of the instances.
For each combination of parameters, 100 instances were generated.
The column ``Sat.''\ reports the percentage of the instances that are satisfiable.
The column $w(\pi)$ shows the average weight of the optimal complete plans.
The other columns compare the MIP-based solver to the PBB algorithm.
Each of them is given one hour for each instance.
The ``Solved'' columns show the percentage of instances successfully solved within the one hour limit by each of the solvers.
The ``Time, sec'' columns show the average time taken by each of the approaches.
If at least one of the runs failed for a solver, a question mark is shown in the corresponding cell of the table.
The $w_C(\pi)$ and $w_A(\pi)$ columns show the components of the weight corresponding to the constraints and the authorisations penalties, respectively.
For those parameters where at least one of the runs failed, we use the ``Best $w(\pi)$'' columns to to report the average weight of the best plan obtained by each of the solvers.

For each $k$, Table~\ref{tab:results} includes a range of instances starting from lightly constrained instances, which are mostly satisfiable, to highly constrained instances, none of which is satisfiable.
Naturally, the most interesting instances from the perspective of \vwsp are those that are unsatisfiable (since it is necessary to find an optimal plan of non-zero weight for such instances).
We are most interested in the unsatisfiable instances with moderate weights of the optimal complete plans.
A small weight $w(\pi)$ indicates that only a few minor exceptions are needed to implement the complete plan $\pi$.
With such a plan, it is easy to identify the bottleneck of the problem and refine it or accept the exceptions to the constraints as the exceptions are likely to be mild.
The $w_C(\pi)$ and $w_A(\pi)$ columns show that in most of the cases the authorisations were not broken.
In fact, there were only a few highly-constrained instances in which the the optimal complete plans assigned some steps to consultants, as the penalty for doing that is relatively high in our instances.

The complexity of the instances depends to a great extent on the number of steps $k$ and the parameters of the instances.
While small lightly-constrained instances can be easily tackled by either of the solvers, other instances clearly require an efficient algorithm.
The MIP solver succeeds with all the instances of size $k = 20$ but fails to solve many of the larger instances within an hour.
The PBB algorithm demonstrates a much better performance, solving all the instances of size up to $k = 30$ and the majority of the instances of size $k = 35$.
It is worth noting that the running time of the MIP solver can reach 10 minutes for $k = 20$ while the PBB solver solves all such instances within a fraction of a second.

Exact algorithms for solving hard optimisation problems will, necessarily, take a long time to compute results for certain instances.  
However, such an algorithm may find an optimum or near-optimum result long before the whole
solution space has been searched and can thus be used to compute a reasonable solution for instances that do not run to completion.  
The Best $w(\pi)$ column in Table~\ref{tab:results} clearly shows that MIP is far less suitable than PBB for this purpose.

To establish the practical limit on the problem size that each of the solvers can tackle within a reasonable time, we conducted another experiment to determine the number of instances that the two solvers could solve given at most one hour for each instance.  Figure~\ref{fig:solved} shows the results of the experiment.  Each result is averaged over 100 experiments for each instance.

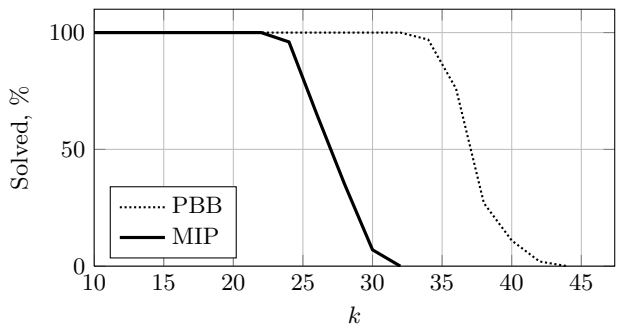
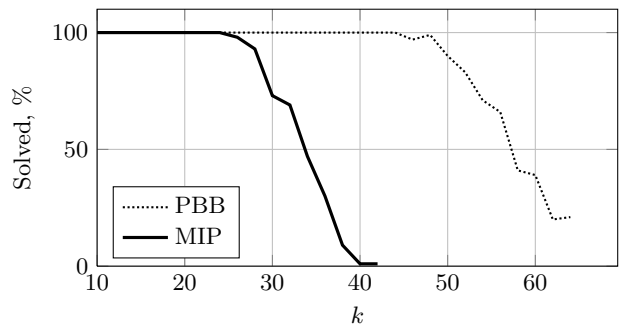
\begin{figure*}[htb]
    \centering
    \begin{subfigure}[t]{0.48\textwidth}
        \centering
        \begin{tikzpicture}
            \begin{axis}[
                compat=newest,
                width=\textwidth,
                height=5cm,
                xlabel={$k$},
                ylabel={Solved, \%},
                ymin=0,
                xmin=10,
                legend cell align=left,
                legend pos=south west,
                legend style={/tikz/every even column/.append style={column sep=0.5em}},
                grid=major
            ]
            \addplot[solid, densely dotted, thick, black] coordinates {
                (6, 100)
                (8, 100)
                (10, 100)
                (12, 100)
                (14, 100)
                (16, 100)
                (18, 100)
                (20, 100)
                (22, 100)
                (24, 100)
                (26, 100)
                (28, 100)
                (30, 100)
                (32, 100)
                (34, 97)
                (36, 76)
                (38, 27)
                (40, 11)
                (42, 2)
                (44, 0)
            };
            \addlegendentry{PBB}
            \addplot[solid, very thick, black] coordinates {
                (10, 100)
                (12, 100)
                (14, 100)
                (16, 100)
                (18, 100)
                (20, 100)
                (22, 100)
                (24, 96)
                (26, 65)
                (28, 35)
                (30, 7)
                (32, 0)
            };
            \addlegendentry{MIP}
            \end{axis}
        \end{tikzpicture}
        \caption{$d = 20\%$ and $\alpha = 1.00$}
        \label{fig:solved1}
    \end{subfigure}%
\hspace{0.04\textwidth}%
    \begin{subfigure}[t]{0.48\textwidth}
        \centering
        \begin{tikzpicture}
            \begin{axis}[
                compat=newest,
                width=\textwidth,
                height=5cm,
                ymin=0,
                xmin=10,
                xlabel={$k$},
                ylabel={Solved, \%},
                legend cell align=left,
                legend pos=south west,
                legend style={/tikz/every even column/.append style={column sep=0.5em}},
                        grid=major,,
            ]
            \addplot[solid, densely dotted, thick, black] coordinates {
                (6, 100)
                (8, 100)
                (10, 100)
                (12, 100)
                (14, 100)
                (16, 100)
                (18, 100)
                (20, 100)
                (22, 100)
                (24, 100)
                (26, 100)
                (28, 100)
                (30, 100)
                (32, 100)
                (34, 100)
                (36, 100)
                (38, 100)
                (40, 100)
                (42, 100)
                (44, 100)
                (46, 97)
                (48, 99)
                (50, 90)
                (52, 83)
                (54, 71)
                (56, 66)
                (58, 41)
                (60, 39)
                (62, 20)
                (64, 21)
            };
            \addlegendentry{PBB}
            \addplot[solid, very thick, black] coordinates {
                (10, 100)
                (12, 100)
                (14, 100)
                (16, 100)
                (18, 100)
                (20, 100)
                (22, 100)
                (24, 100)
                (26, 98)
                (28, 93)
                (30, 73)
                (32, 69)
                (34, 47)
                (36, 30)
                (38, 9)
                (40, 1)
                (42, 1)
            };
            \addlegendentry{MIP}
            \end{axis}
        \end{tikzpicture}

        \caption{$d = 10\%$ and $\alpha = 0.75$}
        \label{fig:solved2}
    \end{subfigure}

\caption{Comparison of the methods in terms of the ability to solve an instance within one hour}
\label{fig:solved}
\end{figure*}

Figure~\ref{fig:solved1} shows the performance of the methods on highly constrained instances.
Being given one hour, PBB solves 100\% of the instances of size up to $k = 32$.
In contrast, MIP can only reliably manage instances for $k \le 22$, and for $k = 32$ it fails to solve any instances at all.
Figure~\ref{fig:solved2} reports the results of the same experiment but for less constrained instances.
The results are broadly similar, with PBB solving all the instances of up to $k = 44$, whereas MIP fails for some instances when $k > 24$.
This experiment shows that the PBB algorithm significantly extends the range of solvable instances of \vwsp, something that will be important for large real-world workflow specifications.
Considering that the running time of each of the methods grows exponentially with the size of the problem (see appendix), large instances of \vwsp would require enormous computational power to be solved with MIP, while the PBB algorithm tackles them within minutes on a regular machine.

\section{Related Work}\label{sec:related-work}

Recent work on workflow satisfiability has borrowed techniques from the literature on constraint satisfaction~\cite{CoCrGaGuJo14}.
Indeed, WSP may be regarded as a constraint satisfaction problem, albeit with some unusual features which makes the study of WSP of interest in its own right.
Recent work in the constraint satisfaction community has made a distinction between ``hard'' and ``soft'' constraints: the former must be satisfied, while the latter may be broken provided the ``cost'' of breaking the constraint is taken into account.

The \emph{valued constraint satisfaction problem}, or VCSP for short, was introduced by Schiex, Fargier and Verfaillie~\cite{Schiex95} as a unifying framework for studying constraint programming with soft constraints. 
The study of a special case of VCSP, called {\em finite-valued} VCSP, was initiated by Cohen \etal~\cite{Cohen06}. 
In this case, useful for many applications, all weights are in $\mathbb{Z}$ (i.e., finite) and the objective function is the sum of appropriate weights. 
Valued CSP has influenced our framework for defining costs and \vwsp.

Recent work on WSP introduced the notion of a pattern for user-independent constraints, and bespoke algorithms, optimised to solve WSP using patterns, have been developed~\cite{CoCrGaGuJo14,KaGaGu}.
The branch-and-bound algorithm in Section~\ref{sec:branch-and-bound} is influenced by the work of Karapetyan, Gagarin and Gutin~\cite{KaGaGu}.

The most closely related work in the literature on access control in workflows is that of Basin, Burri and Karjoth~\cite{BaBuKa12}, which considers the cost of modifying the authorisation policy when the workflow is unsatisfiable.
They encode the problem of minimizing this cost as a integer linear programming problem and use off-the-shelf software to solve the resulting problem.
We tackle the problem of an unsatisfiable workflow specification in a different way.
We assume the constraints and authorisation policy are fixed and instead associate costs with breaking the constraints and/or policies.
However, each violation will incur a cost and the goal of \vwsp is to minimise that cost.
Thus our approach provides greater flexibility than that of Basin \etal: we can break constraints as well as override the existing authorisation policy.
Obviously, there may be constraints (arising from statutory requirements, say) that cannot be broken.
Violation of such a constraint is simply assigned the maximum cost.
And of course, we can always refuse to implement a plan proposed by the algorithm.

Our work is also related to the growing body of research on risk-based and risk-aware access control~\cite{ChCr11,ChRoKeKaWaRe07,DiBeEyBaMo04,NiBeLo10}.
In such approaches, the decision returned by policy decision point for a given access request is not necessarily a simple ``allow'' or ``deny''.
The decision may be a number in the range $[0,1]$ indicating the risk associated with allowing the request, which allows the policy enforcement point to allow or deny the request on the basis of cumulative risk (either on a per-user or system basis).
The decision may also include an obligation that must be fulfilled by the policy enforcement point or requester to ensure that the risk is recorded and/or mitigated appropriately.

There is little work in the security literature on risk-aware workflows.
One exception is the MRARD framework of Han, Ni and Chen~\cite{HaNiCh09}.
However, the emphasis of their work (and of similar work in the business processing literature) is on the modelling and computation of risk, rather than determining an optimal assignment of users to steps in a workflow given the risk metrics.

\section{Concluding Remarks}\label{sec:conclusion}

We have established a framework that enables us to reason about unsatisfiable workflow specifications by associating costs with policy and constraint violations.
This, in turn, enables us to formulate the \vwsp, whose solution provides an assignment of users to steps that minimises the total cost of violations.
We have developed a bespoke algorithm for solving \vwsp and shown that its performance is far better than a generic solver, both in terms of the time taken to solve \vwsp and the range of instances that can be solved in a reasonable amount of time.

There are several interesting possibilities for future work.
One obvious possibility is to move to a completely risk-based approach for the assignment of users to steps in workflows.
Specifically, we retain the constraints but replace the authorisation policy with a risk matrix, associating each user-step pair with a risk.
The goal would be to ensure that the risk associated with a workflow instance remains below some specified threshold.

A second possibility arises from the idea of associating each pair $(T,u)$ with a cost, which provides the basis for an alternative ``non-linear'' approach to access control.
Suppose that we consider a set of permissions $P$, as in conventional role-based access control, and we associate a cost $\omega(Q,u)$ with each pair, where $u$ is a user and $Q$ is a subset of $P$.
Given an RBAC policy, expressed as a user-role relation $\ua \subseteq U \times R$ and permission-role relation $\pa \subseteq R \times P$, we write $P(u)$ to denote the set of permissions for which $u$ is authorised: that is, \mbox{$P(u) = \set{p \in P : \exists r \in R, (u,r) \in \ua, (r,p) \in \pa}$}. Then we define the weight of the policy to be
 \[
  w_A(\ua,\pa) = \sum_{u \in U} \omega(P(u),u).
 \]
This then raises some interesting questions that may have practical value.
We might, for example, consider the following problem: given inputs $U$, $P$, $\set{\omega(Q,u): u \in U, Q \subseteq P}$ and integer $k$, compute a set of roles of $R$ of cardinality $k$ and relations $\ua \subseteq U \times R$ and $\pa \subseteq R \times P$ such that at least one user is authorised for every permission and $w_A(\ua,\pa)$ is minimised.
Alternatively, we may insist that a user session does not exceed a ``budget'', where the cost of a session in which user $u$ invokes permissions $Q$ is defined to be $\omega(Q,u)$.

\bibliography{refs}

\begin{thebibliography}{10}

\bibitem{ansi-rbac04}
{\sc American National Standards Institute}.
\newblock {\em {ANSI INCITS} 359-2004 for Role Based Access Control}, 2004.

\bibitem{BaBuKa12}
{\sc Basin, D.~A., Burri, S.~J., and Karjoth, G.}
\newblock Optimal workflow-aware authorizations.
\newblock In {\em 17th {ACM} Symposium on Access Control Models and
  Technologies, {SACMAT}'12, Newark, NJ, {USA} -- June 20--22, 2012\/} (2012),
  V.~Atluri, J.~Vaidya, A.~Kern, and M.~Kantarcioglu, Eds., {ACM}, pp.~93--102.

\bibitem{BrNa89}
{\sc Brewer, D. F.~C., and Nash, M.~J.}
\newblock The {Chinese Wall} security policy.
\newblock In {\em IEEE Symposium on Security and Privacy\/} (1989), IEEE
  Computer Society, pp.~206--214.

\bibitem{ChCr11}
{\sc Chen, L., and Crampton, J.}
\newblock Risk-aware role-based access control.
\newblock In {\em Security and Trust Management - 7th International Workshop,
  {STM} 2011, Copenhagen, Denmark, June 27--28, 2011, Revised Selected
  Papers\/} (2011), C.~Meadows and M.~C.~F. Gago, Eds., vol.~7170 of {\em
  Lecture Notes in Computer Science}, Springer, pp.~140--156.

\bibitem{ChRoKeKaWaRe07}
{\sc Cheng, P., Rohatgi, P., Keser, C., Karger, P.~A., Wagner, G.~M., and
  Reninger, A.~S.}
\newblock Fuzzy multi-level security: An experiment on quantified risk-adaptive
  access control.
\newblock In {\em 2007 {IEEE} Symposium on Security and Privacy (S{\&}P 2007),
  20--23 May 2007, Oakland, California, {USA}\/} (2007), {IEEE} Computer
  Society, pp.~222--230.

\bibitem{CoCrGaGuJo14c}
{\sc Cohen, D., Crampton, J., Gagarin, A., Gutin, G., and Jones, M.}
\newblock Engineering algorithms for workflow satisfiability problem with
  user-independent constraints.
\newblock In {\em Frontiers in Algorithmics, FAW 2014\/} (2014), J.~Chen,
  J.~Hopcroft, and J.~Wang, Eds., vol.~8497 of {\em Lecture Notes in Computer
  Science}, Springer, pp.~48--59.

\bibitem{CoCrGaGuJo14}
{\sc Cohen, D., Crampton, J., Gagarin, A., Gutin, G., and Jones, M.}
\newblock Iterative plan construction for the workflow satisfiability problem.
\newblock {\em J. Artif. Intel. Res. 51\/} (2014), 555--577.

\bibitem{Cohen06}
{\sc Cohen, D.~A., Cooper, M.~C., Jeavons, P.~G., and Krokhin, A.~A.}
\newblock The complexity of soft constraint satisfaction.
\newblock {\em Artif. Intell. 170}, 11 (Aug. 2006), 983--1016.

\bibitem{CrCrGuJoRa13}
{\sc Crampton, J., Crowston, R., Gutin, G., Jones, M., and Ramanujan, M.~S.}
\newblock Fixed-parameter tractability of workflow satisfiability in the
  presence of seniority constraints.
\newblock In {\em FAW-AAIM\/} (2013), M.~R. Fellows, X.~Tan, and B.~Zhu, Eds.,
  vol.~7924 of {\em Lecture Notes in Computer Science}, Springer, pp.~198--209.

\bibitem{CrGu13}
{\sc Crampton, J., and Gutin, G.}
\newblock Constraint expressions and workflow satisfiability.
\newblock In {\em SACMAT\/} (2013), M.~Conti, J.~Vaidya, and A.~Schaad, Eds.,
  ACM, pp.~73--84.

\bibitem{CrGuYe13}
{\sc Crampton, J., Gutin, G., and Yeo, A.}
\newblock On the parameterized complexity and kernelization of the workflow
  satisfiability problem.
\newblock {\em ACM Trans. Inf. Syst. Secur. 16}, 1 (2013), 4.

\bibitem{DiBeEyBaMo04}
{\sc Dimmock, N., Belokosztolszki, A., Eyers, D.~M., Bacon, J., and Moody, K.}
\newblock Using trust and risk in role-based access control policies.
\newblock In {\em {SACMAT} 2004, 9th {ACM} Symposium on Access Control Models
  and Technologies, Yorktown Heights, New York, USA, June 2--4, 2004,
  Proceedings\/} (2004), T.~Jaeger and E.~Ferrari, Eds., {ACM}, pp.~156--162.

\bibitem{DoFe99}
{\sc Downey, R.~G., and Fellows, M.~R.}
\newblock {\em Parameterized Complexity}.
\newblock Springer Verlag, 1999.

\bibitem{Er88}
{\sc Er, M.}
\newblock A fast algorithm for generation set partitions.
\newblock {\em The Computer Journal 31\/} (1988), 283--284.

\bibitem{HaNiCh09}
{\sc Han, W., Ni, Q., and Chen, H.}
\newblock Apply measurable risk to strengthen security of a role-based
  delegation supporting workflow system.
\newblock In {\em {POLICY} 2009, {IEEE} International Symposium on Policies for
  Distributed Systems and Networks, London, UK, 20--22 July 2009\/} (2009),
  {IEEE} Computer Society, pp.~45--52.

\bibitem{ImPaZa01}
{\sc Impagliazzo, R., Paturi, R., and Zane, F.}
\newblock Which problems have strongly exponential complexity?
\newblock {\em J. Comput. Syst. Sci. 63}, 4 (2001), 512--530.

\bibitem{KaGaGu}
{\sc Karapetyan, D., Gagarin, A., and Gutin, G.}
\newblock Pattern backtracking algorithm for the workflow satisfiability
  problem.
\newblock arXiv:1412.7834, Dec. 2014.

\bibitem{Ku55}
{\sc Kuhn, H.~W.}
\newblock The {H}ungarian method for the assignment problem.
\newblock {\em Naval Research Logistics Quarterly 2}, 1-2 (1955), 83--97.

\bibitem{NiBeLo10}
{\sc Ni, Q., Bertino, E., and Lobo, J.}
\newblock Risk-based access control systems built on fuzzy inferences.
\newblock In {\em Proceedings of the 5th {ACM} Symposium on Information,
  Computer and Communications Security, {ASIACCS} 2010, Beijing, China, April
  13--16, 2010\/} (2010), D.~Feng, D.~A. Basin, and P.~Liu, Eds., {ACM},
  pp.~250--260.

\bibitem{Ni06}
{\sc Niedermeier, R.}
\newblock {\em Invitation to Fixed-Parameter Algorithms}.
\newblock Oxford University Press, 2006.

\bibitem{ScLoSo06}
{\sc Schaad, A., Lotz, V., and Sohr, K.}
\newblock A model-checking approach to analysing organisational controls in a
  loan origination process.
\newblock In {\em {SACMAT} 2006, 11th {ACM} Symposium on Access Control Models
  and Technologies, Lake Tahoe, California, USA, June 7--9, 2006,
  Proceedings\/} (2006), D.~F. Ferraiolo and I.~Ray, Eds., {ACM}, pp.~139--149.

\bibitem{Schiex95}
{\sc Schiex, T., Fargier, H., and Verfaillie, G.}
\newblock Valued constraint satisfaction problems: Hard and easy problems.
\newblock In {\em Proceedings of the 14th International Joint Conference on
  Artificial Intelligence -- Volume 1\/} (San Francisco, CA, USA, 1995),
  IJCAI'95, Morgan Kaufmann Publishers Inc., pp.~631--637.

\bibitem{WaLi10}
{\sc Wang, Q., and Li, N.}
\newblock Satisfiability and resiliency in workflow authorization systems.
\newblock {\em ACM Trans. Inf. Syst. Secur. 13}, 4 (2010), 40.

\end{thebibliography}
\bibliographystyle{acm}


\appendix\section*{Running Times of the Algorithms}

Figure~\ref{fig:runtime} shows the running times of the two algorithms for two different choices of instance parameters.
(Each point on the graphs is averaged over 20 instances.)
Plotted on a log scale, the approximately linear growth in the the running times of the algorithms clearly demonstrates that the time required to solve \vwsp grows exponentially as the number of steps $k$ increases.
The figure also shows that the speed of growth is similar for both methods.  
However, the figure also clearly illustrates that the PBB algorithm outperforms the MIP solver by several orders of magnitude.

\begin{figure}[htb]
\begin{subfigure}{\columnwidth}
\begin{tikzpicture}
	\begin{semilogyaxis}[
		compat=newest,
		width=\columnwidth,
		height=7cm,
		xlabel={$k$},
		ylabel={Running time, sec},
		legend cell align=left,
		legend pos=south east,
		grid=major,
	]
	\addplot[solid, densely dotted, thick, black] coordinates {
		(14, 0.01265155)
		(16, 0.0319072)
		(18, 0.04481565)
		(20, 0.11221665)
		(22, 0.2429501)
		(24, 0.6047307)
		(26, 2.5067187)
		(28, 8.1764823)
		(30, 32.8055693)
		(32, 96.6746861)
		(34, 363.30923255)
		(36, 6915.2128261)
	};
	\addlegendentry{PBB}
	\addplot[solid, very thick, black] coordinates {
		(10, 2.6738147)
		(12, 4.29069695)
		(14, 10.4197505)
		(16, 70.731748)
		(18, 188.93698255)
		(20, 535.4880369)
		(22, 756.1522563)
		(24, 1594.74560965)
	};
	\addlegendentry{MIP}
	\end{semilogyaxis}
\end{tikzpicture}
\caption{$d = 20\%$, $\alpha = 1.00$}
\label{fig:runtime1}
\end{subfigure}

\vspace*{1.5\baselineskip}

\begin{subfigure}{\columnwidth}
\begin{tikzpicture}
	\begin{semilogyaxis}[
		compat=newest,
		width=\columnwidth,
		height=6cm,
		xlabel={$k$},
		ylabel={Running time, sec},
		legend cell align=left,
		legend pos=south east,
		grid=major,
	]
	\addplot[solid, densely dotted, thick, black] coordinates {
		(14, 0.00984225)
		(16, 0.0048627)
		(18, 0.0175306)
		(20, 0.00935345)
		(22, 0.04034725)
		(24, 0.03335325)
		(26, 0.0193532)
		(28, 0.03612955)
		(30, 0.1789187)
		(32, 0.09102915)
		(34, 0.0735934)
		(36, 0.32155715)
		(38, 20.46105735)
		(40, 7.0337532)
		(42, 3.09938665)
		(44, 37.1631886)
		(46, 1490.68014615)
		(48, 19.84727945)
		(50, 10552.89387465)
	};
	\addlegendentry{PBB}
	\addplot[solid, very thick, black] coordinates {
		(10, 0.9932793)
		(12, 1.99175035)
		(14, 4.74861595)
		(16, 9.3575162)
		(18, 54.4678462)
		(20, 58.00983415)
		(22, 242.71836495)
		(24, 345.87222045)
		(26, 758.50626275)
		(28, 1933.9876331)
		(30, 5487.73406835)
	};
	\addlegendentry{MIP}
	\end{semilogyaxis}
\end{tikzpicture}
\caption{$d = 10\%$, $\alpha = 0.75$}
\label{fig:runtime2}
\end{subfigure}
\caption{Run-times of PBB and MIP as a function of $k$}\label{fig:runtime}
\end{figure}
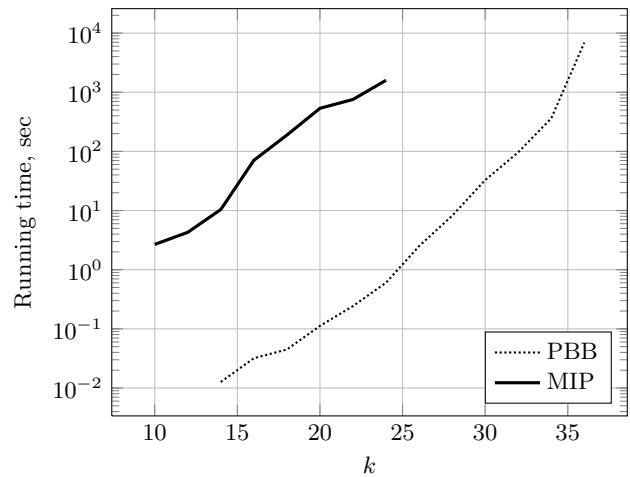
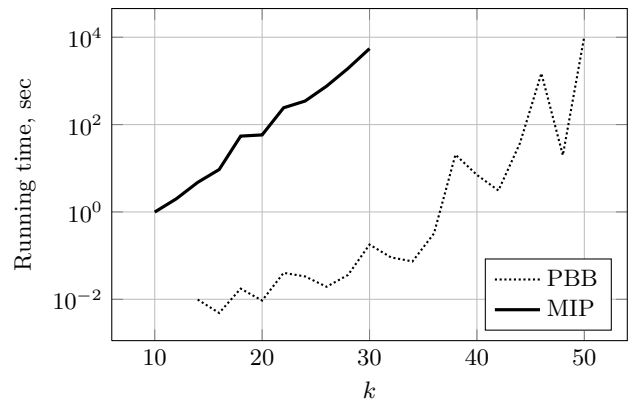

\end{document}